\newcommand{\qed}{\mbox{}\hspace*{\fill}\nolinebreak\mbox{$\rule{0.6em}{0.6em}$}
}
\newcommand{\expect}{{\bf \mbox{\bf E}}}
\newcommand{\prob}{{\bf \mbox{\bf Pr}}}
\definecolor{gray}{rgb}{0.5,0.5,0.5}
\newtheorem{theorem}{Theorem}
\newtheorem{lemma}[theorem]{Lemma}
\newtheorem{corollary}[theorem]{Corollary}
\newtheorem{remark}[theorem]{Remark}
\newenvironment{proof}{{\bf Proof:}}{$\qed$\par}
\newenvironment{proofof}[1]{\noindent{\bf Proof of #1:}}{$\qed$\par}
\newcommand{\M}{\mathbf M}
\begin{document}

\title{\Large Perfect Matchings in $O(n \log n)$ Time in Regular Bipartite Graphs\thanks{A preliminary version of the paper appeared in STOC'10.}}
\author{Ashish Goel\thanks{
    Departments of Management Science and Engineering and (by courtesy)
    Computer Science, Stanford University.
    Email: {\tt ashishg@stanford.edu}.
    Research supported by NSF
    ITR grant 0428868, NSF CAREER award 0339262, and a grant from the
    Stanford-KAUST alliance for academic excellence.}\\
\and Michael Kapralov\thanks{
    Institute for Computational and Mathematical Engineering, Stanford University.
    Email: {\tt kapralov@stanford.edu}. Research supported by a Stanford Graduate Fellowship.}\\   
\and Sanjeev Khanna\thanks{Department of Computer and Information Science, University of Pennsylvania,
Philadelphia PA. Email: {\tt sanjeev@cis.upenn.edu}. Supported in
part by NSF Award CCF-0635084.}
}

\maketitle
\begin{abstract}
  In this paper we consider the well-studied problem of finding a perfect
  matching in a $d$-regular bipartite graph on $2n$ nodes with $m=nd$
  edges. The best-known algorithm for general bipartite graphs (due to
  Hopcroft and Karp) takes time $O(m\sqrt{n})$. In regular bipartite graphs,
  however, a matching is known to be computable in $O(m)$ time (due to Cole,
  Ost and Schirra). In a recent line of work by Goel, Kapralov and Khanna the
  $O(m)$ time algorithm was improved first to $\tilde O\left(\min\{m,
    n^{2.5}/d\}\right)$ and then to $\tilde O\left(\min\{m,
    n^2/d\}\right)$. It was also shown that the latter algorithm is optimal up
  to polylogarithmic factors among all algorithms that use non-adaptive
  uniform sampling to reduce the size of the graph as a first step.
  
  In this paper, we give a randomized algorithm that finds a perfect matching
  in a $d$-regular graph and runs in $O(n\log n)$ time (both in expectation
  and with high probability). The algorithm performs an appropriately
  truncated random walk on a modified graph to successively find augmenting
  paths. Our algorithm may be viewed as using adaptive uniform sampling, and
  is thus able to bypass the limitations of (non-adaptive) uniform sampling
  established in earlier work. We also show that randomization is crucial for
  obtaining $o(nd)$ time algorithms by establishing an $\Omega(nd)$ lower
  bound for any deterministic algorithm.  Our techniques also give an
  algorithm that successively finds a matching in the support of a doubly
  stochastic matrix in expected time $O(n\log^2 n)$ time, with $O(m)$
  pre-processing time; this gives a simple $O(m+mn\log^2 n)$ time algorithm
  for finding the Birkhoff-von Neumann decomposition of a doubly stochastic
  matrix.
\end{abstract}

\section{Introduction}


A bipartite graph $G = (P, Q, E)$ with vertex set $P \cup Q$ and edge set $E
\subseteq P \times Q$ is said to be $d$-regular if every vertex has the same
degree $d$. We use $m=nd$ to denote the number of edges in $G$ and $n$ to
represent the number of vertices in $P$ (as a consequence of regularity, $P$
and $Q$ have the same size). Regular bipartite graphs have been studied
extensively, in particular in the context of expander constructions,
scheduling, routing in switch fabrics, and task-assignment
~\cite{mr:random,amsz:color2003,cos:regular2001}.

A regular bipartite graph of degree $d$ can be decomposed into exactly $d$
perfect matchings, a fact that is an easy consequence of Hall's
theorem~\cite{b:graphtheory} and is closely related to the Birkhoff-von
Neumann decomposition of a doubly stochastic matrix~\cite{b:bvn46,vn:bvn53}.
Finding a matching in a regular bipartite graph is a well-studied problem,
starting with the algorithm of K\"{o}nig in 1916~\cite{k:regular16}, which is
now known to run in time $O(mn)$. The well-known bipartite matching algorithm
of Hopcroft and Karp~\cite{hk:match73} can be used to obtain a running time of
$O(m\sqrt n)$. In graphs where $d$ is a power of 2, the following elegant
idea, due to Gabow and Kariv~\cite{gk:edge1982}, leads to an algorithm with
$O(m)$ running time. First, compute an Euler tour of the graph (in time
$O(m)$) and then follow this tour in an arbitrary direction. Exactly half the
edges will go from left to right; these form a regular bipartite graph of
degree $d/2$. The total running time $T(m)$ thus follows the recurrence $T(m)
= O(m) + T(m/2)$ which yields $T(m) = O(m)$. Extending this idea to the
general case proved quite hard, and after a series of improvements (e.g. by
Cole and Hopcroft~\cite{ch:color82}, and then by Schrijver~\cite{s:color99} to
$O(md)$), Cole, Ost, and Schirra~\cite{cos:regular2001} gave an $O(m)$
algorithm for the case of general $d$. Their main interest was in edge
coloring of general bipartite graphs, where finding perfect matchings in
regular bipartite graphs is an important
subroutine.

Recently, Goel, Kapralov, and Khanna~\cite{gkk:soda09}, gave a sampling-based
algorithm that computes a perfect matching in $d$-regular bipartite graphs in
$O(\min\{m, \frac{n^{2.5}\log n}{d}\})$ expected time, an expression that is
bounded by $\tilde{O}(n^{1.75})$.  The algorithm of~\cite{gkk:soda09} uses
uniform sampling to reduce the number of edges in the input graph while
preserving a perfect matching, and then runs the Hopcroft-Karp algorithm on
the sampled graph. The authors of~\cite{gkk:soda09} also gave a lower bound of
$\tilde \Omega\left(\min\{nd, \frac{n^2}{d}\}\right)$ on the running time of
an algorithm that uses non-adaptive uniform sampling to reduce the number of
edges in the graph as the first step.  This lower bound was matched in
\cite{gkk:xv09}, where the authors use a two stage sampling scheme and a
specialized analysis of the runtime of the Hopcroft-Karp algorithm on the
sampled graph to obtain a runtime of $\tilde O\left(\min\{nd,
  \frac{n^2}{d}\}\right)$.

For sub-linear (in $m$) running time algorithms, the exact data model is
important. In this paper, as well as in the sub-linear running time algorithms
mentioned above, we assume that the graph is presented in the adjacency array
format, i.e., for each vertex, its $d$ neighbors are stored in an array. This
is the most natural input data structure for our problem. Our algorithms will
not make any ordering assumptions within an adjacency array.

Given a partial matching in an undirected graph, an augmenting path is a path
which starts and ends at an unmatched vertex, and alternately contains edges
that are outside and inside the partial matching. Many of the algorithms
mentioned above work by repeatedly finding augmenting paths.

\subsection{Our Results and Techniques}

Our main result is the following theorem.

\begin{theorem} \label{thm:main}
There exists an $O(n\log n)$ expected time algorithm for finding a perfect matching in a $d$-regular bipartite graph $G=(P, Q, E)$ given in adjacency array representation.
\end{theorem}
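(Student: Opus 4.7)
\begin{proofsketch}
The plan is to augment a partial matching $M$ one augmenting path at a time. When $|M|=n-k$ there are exactly $k$ unmatched vertices on each side of $G$; I will aim to find an augmenting path in $O(n/k)$ expected time, so that the total running time telescopes to $\sum_{k=1}^n O(n/k) = O(n\log n)$.

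To find an augmenting path I first pick an unmatched $p_0 \in P$ uniformly at random and then simulate a truncated alternating random walk. Starting from the current vertex $p\in P$, I sample a uniformly random entry of $p$'s adjacency array to obtain a neighbor $q\in Q$ in $O(1)$ time; if $q$ is unmatched in $M$ the walk halts, and otherwise I set $p:=M(q)$ and repeat. The walk is capped at $L = 4n/k$ steps, and if it fails to reach an unmatched $Q$-vertex within $L$ steps I restart with a fresh uniformly random unmatched $p_0$. Once the walk succeeds, the alternating walk $p_0,q_0,p_1,q_1,\ldots,q_t$ can be turned into a simple augmenting path by one linear-time scan that shortcuts repeated $P$-vertices (after which the intermediate $Q$-vertices are automatically distinct, since $q_i=q_j$ among matched $Q$-vertices would force $p_{i+1}=M(q_i)=M(q_j)=p_{j+1}$). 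I then augment $M$ along this simple path.

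The central analytic ingredient is a hitting-time identity driven by $d$-regularity. Let $f(p)$ denote the expected length of the walk started at $p\in P$ before it first reaches an unmatched vertex of $Q$, so that
\[
f(p) \;=\; 1 + \frac{1}{d}\sum_{q\sim p,\, q\text{ matched}} f(M(q)).
\]
Summing over $p\in P$ and swapping the order of summation (using that each matched $q\in Q$ has exactly $d$ neighbors in $P$) gives $\sum_{p\in P} f(p) = n + \sum_{p\in P_{\text{matched}}} f(p)$, and hence $\sum_{p\in P_{\text{unmatched}}} f(p) = n$. The average of $f(p)$ over the $k$ unmatched starting vertices is therefore exactly $n/k$, so a walk from a uniformly random unmatched $p_0$ has expected length $n/k$. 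Markov's inequality with $L=4n/k$ yields success probability at least $3/4$ per attempt, only $O(1)$ attempts are needed in expectation, and so each augmentation costs $O(n/k)$ expected time; summing over $k=1,\ldots,n$ gives $O(n\log n)$.

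The main obstacles I anticipate are of an implementation flavor: I will need $O(1)$-time uniform sampling and deletion from the set of unmatched $P$-vertices (via an unordered array with swap-to-back), $O(1)$-time tests of whether a given $Q$-vertex is matched (plus retrieval of $M(q)$ when it is), and a lazily-initialized scratch array so that the shortcutting step runs in time proportional to the walk length rather than $n$. A further subtlety is upgrading the expected-time bound to a high-probability guarantee: this should follow from the fact that the number of restarts in each of the $n$ phases is dominated by a geometric random variable with constant success probability, so the total running time concentrates around its mean via a Chernoff-type estimate and a union bound over phases.
\end{proofsketch}
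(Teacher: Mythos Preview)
Your proposal is correct and follows the same overall route as the paper: augment one path at a time, locate each augmenting path by a truncated alternating random walk, bound the expected walk length by $O(n/k)$ when $2k$ vertices remain unmatched, and sum over $k$ to obtain $O(n\log n)$.

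The one noteworthy difference is how the $n/k$ hitting-time bound is established. The paper adds an artificial source $s$ and sink $t$ to form a directed ``matching graph'' $H$, identifies $s$ with $t$ to produce a balanced digraph $H^{*}$, and reads off the expected $s\to t$ hitting time as the reciprocal of the stationary measure of the merged vertex, obtaining at most $2+n/k$. You instead write the one-step recurrence for $f(p)$, sum it over $p\in P$, and use $d$-regularity on the $Q$ side to collapse the double sum, arriving at the exact identity $\sum_{p\ \text{unmatched}} f(p)=n$. This is a pleasantly elementary alternative that sidesteps any Markov-chain machinery and yields the same bound (indeed, a slightly sharper constant). One small caveat you should make explicit: your manipulation tacitly assumes every $f(p)$ is finite, which can fail for $p$ lying in a connected component of $G$ that $M$ already matches perfectly; restricting the summation to components containing an unmatched vertex (or assuming without loss of generality that $G$ is connected) repairs this and only weakens the identity to $\sum_{p\ \text{unmatched}} f(p)\le n$.
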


The algorithm is very simple: the matching is constructed by performing one
augmentation at a time, and new augmenting paths are found by performing a
random walk on a modified graph that encodes the current matching.  The random
walk approach may still be viewed as repeatedly drawing a uniform sample from
the adjacency array of some vertex $v$; however this vertex $v$ is now chosen
adaptively, thus allowing us to bypass the $\tilde \Omega\left(\min\{nd,
  \frac{n^2}{d}\}\right)$ lower bound on non-adaptive uniform sampling
established in~\cite{gkk:soda09}.  Somewhat surprisingly, we show that the
total time taken by these random augmentations can be bounded by $O(n \log n)$
in expectation, only slightly worse than the $\Omega(n)$ time needed to simply
output a perfect matching.  The proof involves analyzing the hitting time of
the sink node in the random walk. We also establish that randomization is
crucial to obtaining an $o(nd)$ time algorithm.

\begin{theorem}
\label{thm:detlb}
For any $1 \le d < n/8$, there exists a family of 
$d$-regular graphs on which any deterministic algorithm 
for finding a perfect matching requires $\Omega(nd)$ time.
\end{theorem}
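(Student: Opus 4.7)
}
I would prove this by an adversary argument in the adjacency-array query model. The hard family consists of the disjoint unions of $n/d$ copies of $K_{d,d}$, which are $d$-regular bipartite on $2n$ vertices; the hypothesis $d < n/8$ ensures at least $8$ blocks, giving the adversary slack. The crucial point is that although the true graph has a block structure, the algorithm is only promised $d$-regularity of its input, so it cannot exploit the blocks to guess edges for free.

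The online adversary commits adjacency-list entries lazily. Each vertex is implicitly assigned to a block the first time it is touched by a query (queried directly or returned as an answer), and blocks are filled in order of first touch. Inside a block $b$, the adversary maintains orderings $\pi_P^{(b)}$ and $\pi_Q^{(b)}$ of its two sides and applies a \emph{reuse rule} to every fresh $P$-side query on $p\in b$: return the earliest $q$ in $\pi_Q^{(b)}$ already revealed in $b$ that is not yet a confirmed neighbor of $p$, and only reveal a fresh next-in-order $q$ when no reusable option remains. The symmetric rule governs $Q$-side queries, and the orderings are grown online in a coupled way so that after every query the confirmed edges within $b$ form a Young-diagram shape in the product order: sorting the $P$-vertices of $b$ by descending confirmed degree, each one's confirmed neighborhood is a prefix of $\pi_Q^{(b)}$, and analogously for the $Q$-side.

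Given this invariant, two claims finish the argument. (i) No edge outside the confirmed edge set is forced: for any candidate deduced edge $(p,q)$, the adversary can reroute the unqueried adjacency slots of $p$ and $q$ into a different block using the spare capacity guaranteed by $d < n/8$, yielding a consistent $d$-regular bipartite completion that omits $(p,q)$. Hence the output matching, restricted to each block, must lie inside the confirmed edges of that block. (ii) In each block, all $d$ of the $P$-vertices must have positive confirmed degree (else one is unmatched), and sorting these degrees as $c_1 \le c_2 \le \cdots \le c_d$, the nested prefix structure combined with Hall's theorem applied to the smallest-$k$ subset forces $c_k \ge k$ for every $k$. Thus the number of queries touching block $b$ is at least $1+2+\cdots+d = \Omega(d^2)$, and summing over the $n/d$ blocks yields the claimed $\Omega(nd)$ lower bound.

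The hard part will be rigorously maintaining the Young-diagram invariant on both sides of the bipartition under interleaved $P$-side and $Q$-side queries: on a $Q$-side query of the $k$-th revealed $Q$-vertex of block $b$, the adversary must return a $P$-vertex whose current confirmed degree is exactly $k-1$, and I would need a counting argument, tracking how many vertices currently have each confirmed degree, to show such a $P$-vertex is always available. The other delicate step is the non-deducibility claim in (i), which requires explicitly constructing, for every intermediate adversary state and every would-be deduced edge, a consistent $d$-regular bipartite completion that omits the edge; this rerouting is exactly what the many-blocks assumption $d < n/8$ is designed to enable.
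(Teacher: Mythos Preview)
Your approach differs substantially from the paper's and is considerably more intricate. The paper does not use $K_{d,d}$ blocks or any Young-diagram bookkeeping. Instead it builds a single ``canonical'' gadget on $O(d)$ vertices: two halves $(P_1,Q_1)$ and $(P_2,Q_2)$, each of size $2d$ per side, plus special vertices $s,t$, with a hidden size-$d$ matching $M'$ running between $Q_1$ and $P_2$. A two-line counting argument shows every perfect matching must use some edge of $M'$ (since $s$ eats a $P_1$ vertex, forcing a $Q_1$ vertex to match into $P_2$). The adversary then simply answers every query within the same half --- a query in $P_i\cup Q_i$ is answered by any free non-neighbor on the opposite side of half $i$ --- never revealing any $M'$ edge; a direct degree count shows this remains feasible for $d^2$ queries before the supply of free vertices in $Q_1$ or $P_2$ drops below $d+1$. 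Taking $\Theta(n/d)$ disjoint copies gives $\Omega(nd)$. No Hall's theorem, no nested-prefix invariant, no rerouting between blocks.

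Your route can probably be completed, but the reuse rule as stated does \emph{not} maintain the two-sided Young-diagram invariant. From the state $N(p_1)=\{q_1,q_2\}$, $N(p_2)=N(p_3)=\{q_1\}$, a second query on $p_3$ returns $q_2$ under your rule, making $N(q_2)=\{p_1,p_3\}$, which is not a prefix of $\pi_P=(p_1,p_2,p_3)$. The fix is to permit dynamic \emph{reordering} of $\pi_P$ and $\pi_Q$: when you must add a cell to a row that is tied in length with rows above it, first promote that row to the top of its tied block (and symmetrically for columns). With this amendment the invariant is indeed maintainable and your Hall/staircase bound $\sum_k c_k\ge d(d+1)/2$ per block goes through. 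This reordering is the real content of the ``counting argument'' you flag as the hard part, and it is a genuine missing step; the paper's construction sidesteps all of it by making the bottleneck a single hidden matching rather than a structural invariant on the revealed edge set.
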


\smallskip
Our techniques also 
extend to the problem of finding a perfect matching in the
support of a doubly-stochastic matrix, as well as to efficiently compute 
the Birkhoff-von-Neumann decomposition of a doubly stochastic matrix.

\begin{theorem} \label{thm:bvn}
  Given an $n \times n$ doubly-stochastic matrix $M$ with $m$ non-zero
  entries, one can find a perfect matching in the support of $M$ in
  $O(n\log^2 n)$ expected time with $O(m)$ preprocessing time.
\end{theorem}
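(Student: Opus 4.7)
The plan is to adapt the random walk algorithm of Theorem~\ref{thm:main} to the weighted setting. View $M$ as a weighted bipartite graph $G_M = (P, Q, E)$ where vertex $i \in P$ corresponds to row $i$, vertex $j \in Q$ corresponds to column $j$, and each edge $(i,j)$ with $M_{ij} > 0$ carries weight $M_{ij}$. The critical structural property exploited in the random walk analysis of Theorem~\ref{thm:main} is that the walk's transition kernel is doubly stochastic; in the regular unweighted case this is automatic because the walk picks a uniformly random neighbor at each step, but here I would replace uniform sampling at $v$ by sampling a neighbor $u$ with probability $M_{v,u}$. Since $M$ is doubly stochastic by assumption, the resulting transition kernel inherits this property, and the expected hitting time bound underlying Theorem~\ref{thm:main} carries over, giving $O(n \log n)$ random walk steps in expectation to produce one perfect matching via successive augmentations.

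The principal algorithmic change is that weighted sampling from each row or column must be supported in sublinear time per query and must accommodate the updates induced by successive augmentations. For each row $i$ (and symmetrically each column $j$), I would maintain a Fenwick tree or balanced binary search tree keyed by the non-zero column indices and storing partial sums of $M_{i,\cdot}$; this supports sampling from the distribution $M_{i,\cdot}$ in $O(\log n)$ time and allows $O(\log n)$-time reweighting or deletion of entries. These structures are built in $O(m)$ total time by bulk construction across all rows and columns, which accounts for the $O(m)$ preprocessing cost. Each random walk step then costs $O(\log n)$, and combining with the $O(n \log n)$ bound on the total number of steps yields the claimed $O(n \log^2 n)$ expected runtime for one matching.

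The main technical hurdle I anticipate is verifying carefully that the modified-graph construction of Theorem~\ref{thm:main}, which encodes the current partial matching so that the random walk's transition kernel remains doubly stochastic, has a clean weighted analog that preserves row and column sums after matched edges are removed or reweighted. In the regular case this invariant is enforced by the combinatorial regularity of the underlying graph; in the doubly stochastic setting I expect the analog to amount to reweighting matched edges so that row and column sums remain unity, which reduces to a direct algebraic check. Once this invariant is in place, the hitting time argument of Theorem~\ref{thm:main} applies verbatim, and the runtime bound follows from the sampling cost analysis above.
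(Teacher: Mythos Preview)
Your approach is essentially the paper's: treat $M$ as a weighted bipartite graph (equivalently, a regular multigraph), run the same truncated random walk with sampling proportional to entry weight, and implement \textsc{Sample-Out-Edge} via augmented balanced BSTs laid directly over the adjacency arrays in $O(m)$ total time.

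One clarification on the ``main technical hurdle'' you flag. The property that drives Lemma~\ref{lm:main} is \emph{not} that the transition kernel is doubly stochastic---it is not, since the stationary distribution on $H^*$ is proportional to weighted out-degree rather than uniform---but merely that $H^*$ is balanced: weighted in-degree equals weighted out-degree at every vertex. No reweighting is needed to maintain this. When a matched pair $(u,v)$ is contracted and the edge of weight $M_{uv}$ is dropped, the supernode's weighted out-degree is the residual row sum $1-M_{uv}$ and its weighted in-degree is the residual column sum $1-M_{uv}$, so balance is automatic. The return-time calculation then gives $1/\pi_{s^*}\le 2+n/k$ exactly as in the unweighted case, and the $O(n\log n)$ step bound plus $O(\log n)$ per sample yields the claimed $O(n\log^2 n)$.
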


In many applications of Birkhoff von Neumann decompositions (e.g. routing in network switches~\cite{clj:bvn_routing}), we need to find one perfect 
matching in a single iteration, and then update the weights of the matched edges. In such applications, each iteration can be implemented in $O(n \log^2 n)$ time (after initial
$O(m)$ preprocessing time), 
improving upon the previous best known bound of $O(mb)$ where $b$ is the bit precision.

\begin{corollary} \label{cor:bvn} For any $k \ge 1$, there exists an $O(m
  +kn\log^2 n)$ expected time algorithm for finding $k$ distinct matchings (if
  they exist) in the Birkhoff-von-Neumann decomposition of an $n\times n$
  doubly stochastic matrix with $m$ non-zero entries.
\end{corollary}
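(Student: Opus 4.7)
The plan is to invoke the algorithm of Theorem~\ref{thm:bvn} $k$ times in succession, amortizing its $O(m)$ preprocessing cost over all $k$ iterations. Starting from $M^{(0)} = M$, in iteration $i$ I would run the $O(n\log^2 n)$ expected-time procedure of Theorem~\ref{thm:bvn} to find a perfect matching $M_i$ in the support of $M^{(i-1)}$, set $\alpha_i = \min_{(u,v) \in M_i} M^{(i-1)}[u,v]$, and update $M^{(i)} = M^{(i-1)} - \alpha_i P_i$, where $P_i$ is the permutation matrix of $M_i$. Since every row and column sum of $M^{(i)}$ equals $1 - \sum_{j \le i} \alpha_j$, its normalization is doubly stochastic and the hypothesis of Theorem~\ref{thm:bvn} continues to hold; the subroutine is insensitive to a uniform rescaling, since its sampling only depends on relative entry weights within a row (resp. column). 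By the choice of $\alpha_i$ at least one entry leaves the support at each step, so the matchings produced are distinct, and the process succeeds for any $k$ for which $M$ admits at least $k$ distinct matchings in its Birkhoff--von Neumann decomposition.

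The main implementation issue will be maintaining the sampling data structures built in the $O(m)$ preprocessing of Theorem~\ref{thm:bvn} under these updates, without rebuilding them from scratch each time. Each iteration modifies only the $n$ entries that correspond to the freshly found matching; using standard per-row and per-column weighted-sampling structures such as balanced binary search trees or Fenwick trees keyed on cumulative weight, each individual modification (either a weight decrease or a deletion) takes $O(\log n)$ time. The update overhead per iteration is therefore $O(n \log n)$, which is dominated by the $O(n \log^2 n)$ cost of running Theorem~\ref{thm:bvn} itself. Summing across iterations yields a total expected running time of $O(m) + k \cdot O(n \log^2 n) = O(m + k n \log^2 n)$, matching the claim.

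The main obstacle, such as it is, is purely bookkeeping: verifying (i) that the subroutine of Theorem~\ref{thm:bvn} is scale-invariant and so can be reapplied to $M^{(i)}$ in place of a freshly normalized doubly stochastic matrix, and (ii) that the weighted-sampling data structure supporting both $O(\log n)$ sampling and $O(\log n)$ point updates can indeed be initialized in $O(m)$ time and interoperates with the random walk of Theorem~\ref{thm:bvn}. Both checks are routine, and no new algorithmic idea beyond Theorem~\ref{thm:bvn} is required.
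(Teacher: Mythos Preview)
Your proposal is correct and follows essentially the same approach as the paper: iterate Theorem~\ref{thm:bvn}, subtract off the minimum-weight multiple of each matching found, and maintain the per-vertex augmented balanced binary search trees under the resulting $O(n)$ weight-updates/deletions at $O(\log n)$ apiece. The paper's own argument is in fact terser than yours; your remarks on scale-invariance and on handling weight decreases (not just deletions) are valid refinements of the same idea.
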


FInally, we note that an application of Yao's min-max theorem (see, for instance,~\cite{mr:random}) to
Theorem~\ref{thm:main} immediately yields the following corollary:
\smallskip
\begin{corollary}
  For any distribution on regular bipartite graphs with $2n$ nodes, there
  exists a deterministic algorithm that runs in average time $O(n\log n)$ on
  graphs drawn from this distribution.
\end{corollary}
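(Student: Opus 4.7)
The plan is to apply the easy direction of Yao's minimax principle (essentially an averaging / Fubini argument) directly to the randomized algorithm $\mathcal{A}$ guaranteed by Theorem~\ref{thm:main}. By that theorem, there is a constant $c$ such that for every $d$-regular bipartite graph $G$ on $2n$ nodes, the expected running time $\expect_r[T(G,r)]$ of $\mathcal{A}$ on $G$ (expectation taken over the algorithm's random string $r$) satisfies $\expect_r[T(G,r)] \le c\, n\log n$.

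Fix now an arbitrary distribution $\mathcal{D}$ over $d$-regular bipartite graphs on $2n$ nodes (with $d$ possibly depending on $G$, but this does not matter for the argument). Consider the average running time of $\mathcal{A}$ when the input is drawn from $\mathcal{D}$:
\[
\expect_{G\sim\mathcal{D}}\bigl[\expect_r[T(G,r)]\bigr] \le c\, n\log n,
\]
where the inequality is just the pointwise bound above, averaged over $G$. Since the random string $r$ is independent of the input, we may exchange the order of expectations (Fubini), obtaining
\[
\expect_r\bigl[\expect_{G\sim\mathcal{D}}[T(G,r)]\bigr] \le c\, n\log n.
\]
Hence there must exist at least one fixing $r^\ast$ of the random string such that $\expect_{G\sim\mathcal{D}}[T(G,r^\ast)] \le c\, n\log n$.

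Hardwiring this particular $r^\ast$ into $\mathcal{A}$ yields a deterministic algorithm $\mathcal{A}_{r^\ast}$ whose average running time on inputs drawn from $\mathcal{D}$ is $O(n\log n)$, as required. There is no real obstacle here beyond checking that the expected-time guarantee of Theorem~\ref{thm:main} holds pointwise on every regular bipartite graph (it does, by hypothesis) so that the outer expectation over $\mathcal{D}$ is finite and the swap of expectations is justified; the argument is otherwise the standard ``hard distribution yields a hard deterministic algorithm'' half of Yao's principle.
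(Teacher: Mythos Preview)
Your proof is correct and is precisely the argument the paper has in mind: the paper simply says the corollary follows from ``an application of Yao's min-max theorem\ldots to Theorem~\ref{thm:main}'' without spelling out the details, and what you have written is exactly that application (the easy, averaging direction of Yao). There is nothing to add.
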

A similar corollary also follows for doubly stochastic matrices.

\subsection{Organization}

Section~\ref{sec:matching_algo} gives the $O(n \log n)$ time algorithm to find
a perfect matching, and establishes Theorem~\ref{thm:main}.  Building on the
ideas developed in Section~\ref{sec:matching_algo}, we present in
Section~\ref{sec:bvn} algorithms for finding matchings in doubly-stochastic
matrices, and computing a Birkhoff-von-Neumann decomposition, establishing
Theorem~\ref{thm:bvn} and Corollary~\ref{cor:bvn}.  Finally, in
Section~\ref{sec:detlb}, we present an $\Omega(nd)$ lower bound for any
deterministic algorithm that finds a perfect matching in a $d$-regular graph.

\section{Matchings in $d$-Regular Bipartite Graphs}
\label{sec:matching_algo}

We start with a brief overview of our algorithm.  Let $G=(P, Q, E)$ denote the
input $d$-regular graph. Given any matching $M$ in $G$, the algorithm {\em
  implicitly} constructs a directed graph, called the {\em matching graph} of
$M$, with a source vertex $s$ and a sink vertex $t$. Any $s$ to $t$ path in
the graph $H$ defines an augmenting path in $G$ with respect to the matching
$M$.  The algorithm searches for an $s \leadsto t$ path using a random walk
from $s$, and once a path is found, the matching $M$ is augmented. We repeat
this process until we obtain a perfect matching. 

In what follows, we first describe the matching graph and prove a key lemma
bounding the expected running time of a random $s \leadsto t$ walk in $G$.
We then give a formal description of the algorithm, followed by an analysis of 
its running time.

\subsection{The Matching Graph}

Let $G=(P, Q, E)$ be a $d$-regular graph, and let $M$ be a partial matching
that leaves $2k$ nodes unmatched, for some integer $k$. The matching graph
corresponding to the matching $M$ is then defined to be the directed graph $H$
obtained by transforming $G$ as described below:

\begin{enumerate}
\item Orient edges of $G$ from $P$ to $Q$;
\item Contract each pair $(u, v)\in M$ into a supernode;
\item Add a vertex $s$ connected by $d$ parallel edges to each unmatched node
  in $P$, directed out of $s$;
\item Add a vertex $t$ connected by $d$ parallel edges to each unmatched node
  in $Q$, directed into $t$.
\end{enumerate}
The graph $H$ has $n+k+2$ nodes and $n(d-1)+k(2d+1)$ edges.  Note that for
every vertex $v\in H$, $v\neq s, t$ the in-degree of $v$ is equal to its
out-degree, the out-degree of $s$ equals $dk$, as is the in-degree of
$t$. Also, any path from $s$ to $t$ in $H$ gives an augmenting path in $G$
with respect to $M$. We now concentrate on finding a path from $s$ to $t$ in
$H$.

The core of our algorithm is a random walk on the graph $H$ starting at the
vertex $s$, in which an outgoing edge is chosen uniformly at random. The main
lemma in our analysis bounds the hitting time of vertex $t$ for a random walk
started at vertex $s$, when the number of unmatched vertices of $G$ is $2k$.

\begin{lemma} \label{lm:main} Given a $d$-regular bipartite graph $G$ and a
  partial matching $M$ that leaves $2k$ vertices unmatched, construct the
  graph $H$ as above. The expected number of steps before a random walk
  started at $s$ ends at $t$ is at most $2+\frac{n}{k}$.
\end{lemma}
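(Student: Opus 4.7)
The plan is to exploit the ``almost Eulerian'' structure of $H$: every vertex other than $s$ and $t$ already has equal in- and out-degree, with the entire imbalance concentrated at $s$ (out-degree $dk$, in-degree $0$) and $t$ (in-degree $dk$, out-degree $0$). The idea is to close the flow imbalance by augmenting $H$ with $dk$ parallel edges from $t$ to $s$, forming an Eulerian digraph $H'$, and then read off $\expect[T]$ from the stationary distribution of the uniform-out-edge random walk on $H'$.

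First, I would form $H'$ and observe that the uniform-out-edge walk on $H'$ is coupled to the walk on $H$: since the only in-edges of $s$ in $H'$ are the $dk$ new edges from $t$, the walk on $H'$ can return to $s$ only on the step immediately following a visit to $t$. Consequently, the first return time $R_s$ to $s$ in $H'$ satisfies $R_s = T + 1$, where $T$ is the hitting time of $t$ in $H$ starting from $s$.

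Next, I would invoke the standard fact that on a strongly connected Eulerian digraph the uniform-out-edge walk has stationary distribution $\mu(v) = \mathrm{outdeg}(v)/|E(H')|$, so that $\expect[R_s] = 1/\mu(s) = |E(H')|/(dk)$. Using the edge count $|E(H)| = n(d-1) + k(2d+1)$ given in the construction, one gets $|E(H')| = n(d-1) + k(3d+1)$, and a short algebraic manipulation yields
\[
\expect[T] \;=\; \frac{|E(H')|}{dk} - 1 \;=\; 2 + \frac{n}{k} - \frac{n-k}{dk} \;\leq\; 2 + \frac{n}{k}.
\]

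The main obstacle I anticipate is the strong-connectivity requirement in the stationary-distribution step: $H'$ need not be strongly connected in general (for instance, if some connected component of $G$ is already perfectly matched by $M$, its supernodes are unreachable from $s$). I would handle this by restricting to the strongly connected component $C$ of $s$ in $H'$, which must contain $t$ (an $s\leadsto t$ path exists in $H$ by Hall's theorem whenever $M$ is imperfect, and the new $t\to s$ edges close it into a cycle). The component $C$ is itself Eulerian, the walk almost surely stays inside $C$, and $|E(C)| \leq |E(H')|$, so the calculation above yields the same (or a tighter) bound on $\expect[T]$.
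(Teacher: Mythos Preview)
Your proposal is correct and follows essentially the same approach as the paper: the paper closes up $H$ by \emph{identifying} $s$ with $t$ into a single vertex $s^*$ (so that the return time to $s^*$ equals the hitting time $T$ directly), whereas you add $dk$ parallel $t\to s$ edges (so that the return time to $s$ equals $T+1$); both constructions yield an Eulerian digraph and invoke the same return-time $=1/\pi$ identity to obtain the identical bound $2+n/k-(n-k)/(dk)$. Your explicit treatment of the strong-connectivity issue is in fact more careful than the paper, which tacitly assumes it.
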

\begin{proof}
Construct the graph $H^*$ by identifying $s$ with $t$ in $H$. Denote the vertex that corresponds to $s$ and $t$ by $s^*$.
Note that $H^*$ is a balanced directed graph, i.e. the out-degree of every vertex is equal to its in-degree. The out-degree of every vertex except $s^*$ is equal to $d-1$, while the out-degree of $s^*$ is $dk$. Consider a random walk in $H^*$ starting at $s^*$. The expected return time is equal to the inverse of the stationary measure of $s^*$. Since $H^*$ is a balanced directed graph, one has $\pi_i=\frac{\deg(i)}{\sum_{j\in V(H^*)}\deg(j)}$ for any vertex $i\in V(H^*)$, where $\pi_i$ is the stationary measure of vertex $i$. Thus, the expected time to return to $s^*$ is 
\begin{equation*}
\frac1{\pi_{s^*}}=\frac{\sum_{j\in V(H^*)}\deg(j)}{\deg(s^*)}=\frac{(n-k)(d-1)+2kd+kd}{kd}\leq 2+\frac{n}{k}.
\end{equation*}

It remains to note that running a random walk started at $s^*$ in $H^*$ until it returns to $s^*$ corresponds to running a random walk in $H$ starting at $s$ until it hits $t$. 
\end{proof}

\subsection{The Algorithm}

In what follows we shall use the subroutine TRUNCATED-WALK$(u, b)$, which we now define. The subroutine performs $b$ steps of a random walk on $H$ starting at a given vertex $u$:
\begin{enumerate}
\item If $u=t$, return SUCCESS
\item Set $v:=$SAMPLE-OUT-EDGE$(u)$. Set $b:=b-1$. If $b<0$ return FAIL, else return TRUNCATED-WALK$(v, b)$
\end{enumerate}

Here the function SAMPLE-OUT-EDGE$(u)$ returns the other endpoint of a
uniformly random outgoing edge of $u$. The implementation and runtime of
SAMPLE-OUT-EDGE depend on the representation of the graph. It is assumed in
Theorem \ref{thm:main} that the graph $G$ is represented in adjacency array
format, in which case SAMPLE-OUT-EDGE can be implemented to run in expected
constant time. In Theorem \ref{thm:bvn}, however, a preprocessing step will be
required to convert the matrix to an augmented binary search tree, in which
case SAMPLE-OUT-EDGE can be implemented to run in $O(\log n)$ time. We also
note that the graph $H$ is never constructed explicitly since SAMPLE-OUT-EDGE
can be implemented using the original graph.

The following lemma is immediate:
\begin{lemma}
An augmenting path of length at most $b$ can be obtained from the sequence of steps taken by a successful run of TRUNCATED-WALK$(s, b)$ by removing loops.
\end{lemma}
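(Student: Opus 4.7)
The plan is to verify two things in sequence: first, that the loop-removal procedure applied to the walk recorded by a successful run of TRUNCATED-WALK$(s,b)$ produces a simple $s \leadsto t$ path in $H$ using at most $b$ edges; second, that any such simple $s \leadsto t$ path in $H$ automatically corresponds to an augmenting path in $G$ with respect to $M$.

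For the first part, I would observe that TRUNCATED-WALK returns SUCCESS only when the current vertex equals $t$, and the recursion exits only when the budget stays non-negative, so the recorded sequence of visited vertices $s = v_0, v_1, \ldots, v_\ell = t$ consists of valid $H$-edges and satisfies $\ell \leq b$. To strip loops I would iteratively locate any pair $i < j$ with $v_i = v_j$ and delete the infix $v_{i+1}, \ldots, v_j$; after the deletion the edge leaving $v_i$ is still a legitimate $H$-edge (it equals the old edge leaving $v_j$), so the walk property and endpoints are preserved while the edge count strictly decreases. The process terminates with a simple walk of length at most $\ell \leq b$.

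For the second part, I would use the construction of $H$ to read off the structure of any simple $s \leadsto t$ path. Since $s$ has outgoing edges only to unmatched vertices of $P$, the second vertex of the path must be unmatched in $P$; symmetrically, the second-to-last vertex must be unmatched in $Q$. Every other internal vertex must be a supernode, since an unmatched $P$-vertex has incoming edges only from $s$ (so it cannot appear elsewhere on a simple path out of $s$), and an unmatched $Q$-vertex has outgoing edges only to $t$. Expanding each intermediate supernode $\{p,q\}$ in place, every non-$\{s,t\}$ $H$-edge becomes a non-matching edge of $G$ directed from the $P$-side of one supernode to the $Q$-side of the next, and each intermediate supernode contributes its own matching edge $(p,q)$. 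The resulting $G$-walk therefore alternates non-matching and matching edges, starts and ends at unmatched vertices, and (by simplicity of the $H$-path) visits each supernode, hence each $G$-vertex, at most once. This is exactly an augmenting path, and its length is bounded by the number of surviving $H$-edges, which is at most $b$.

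I do not anticipate a substantive obstacle here; the only point that needs a little care is the translation in the second step between the contracted graph $H$ and the original graph $G$, where one has to match the two incidences of each supernode to its $P$-side and $Q$-side endpoints and check that the alternation is preserved. Once this bookkeeping is done, the length bound is inherited directly from the first step.
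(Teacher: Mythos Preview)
The paper itself offers no proof here, labeling the lemma ``immediate,'' so your two-step argument is doing more than the authors did and the overall structure (loop-erase to a simple $H$-path, then translate to $G$) is correct and is exactly the intended reading.

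There is one slip in your final sentence. You correctly observe that each intermediate supernode contributes its own matching edge when expanded, yet you then assert that the augmenting path's length is bounded by the number of surviving $H$-edges. Those two statements are inconsistent: if the loop-erased $H$-path is $s=v_0,v_1,\dots,v_\ell=t$ with $\ell\le b$, then the interior $v_2,\dots,v_{\ell-2}$ consists of $\ell-3$ supernodes, and the $G$-augmenting path has $(\ell-2)$ non-matching edges plus $(\ell-3)$ matching edges, i.e.\ length $2\ell-5\le 2b-5$, not $\le b$. This is harmless for the algorithm (the augmentation step still costs $O(b_j)$, and the paper's own statement is arguably loose on the same point), but the length bound as you wrote it does not follow from the expansion you just described; replace ``at most $b$'' by ``at most $2b-5$,'' or simply drop the explicit length claim for the $G$-path and retain it only for the $H$-path.
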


We can now state our algorithm: 
\begin{description}
\item[Input:] A $d$-regular bipartite graph $G=(P, Q, E)$ in adjacency array
  format.
\item[Output:] A perfect matching of $G$.
\item[1.] Set $j:=0$, $M_0:=\emptyset$.
\item[2.] Repeatedly run TRUNCATED-WALK$(s, b_j)$ where $b_j =
  2 \left( 2 +\frac{n}{n-j} \right)$ on the matching graph $H$, implicitly defined by $M_j$,
  until a successful run is obtained.
\item[3.] Denote the augmenting path obtained by removing possible loops from
  the sequence of steps taken by the walk by $p$. Set $M_{j+1}:=M_j \Delta p$.
\item[4.] Set $j:=j+1$ and go to step 2.
\end{description}

\subsection{Time Complexity}

We now analyze the running time of our algorithm, and prove Theorem \ref{thm:main}.

\begin{proofof}{Theorem \ref{thm:main}}
  We show that the algorithm above takes time $O(n\log n)$ whp. First note
  that by Lemma \ref{lm:main} and Markov's inequality TRUNCATED-WALK$(s, b)$
  succeeds with probability at least $1/2$.  Let $X_j$ denote the time taken by the
  $j$-th augmentation. Let $Y_j$ be independent exponentially
  distributed with mean $\mu_j := \frac{b_j}{ \ln 2}$. Note that
\begin{equation*}
\begin{split}
  \prob[X_j\geq qb_j]\leq
  2^{-q}=\exp\left[-\frac{qb_j\ln 2}{b_j}\right]=\prob[Y_j\geq q b_j]\\
\end{split}
\end{equation*}
for all $q> 1$, so 
\begin{equation} \label{y-dominates}
\prob[X_j\geq x]\leq \prob[Y_j\geq x]
\end{equation}
for all $x>b_j$. We now prove that $Y:=\sum_{0\leq j\leq n-1} Y_j\leq cn\log
n$ w.h.p. for a suitably large positive constant $c$. Denote
$\mu:=\expect[Y]$. 
By Markov's inequality, for any $t,\delta > 0$
\begin{equation*}
\prob[Y \ge (1+\delta)\mu] \le \frac{\expect[e^{tY}]}{e^{t(1+\delta)\mu}}.
\end{equation*}
Also, for any $j$, and for $t<1/\mu_j$, we have
\begin{equation*}
  \expect[e^{tY_j}]=\frac{1}{\mu_j}\int_{0}^{\infty} e^{tx}
  e^{-x/\mu_j}dx=\frac{1}{1-t\mu_j}.
\end{equation*}
The two expressions above, along with the fact that the $Y_j$'s are
independent, combine to give:
\begin{equation}
  \label{eq:chernoff}
  \prob[Y \ge (1+\delta)\mu] \le \frac{e^{-t(1+\delta)\mu}}{\prod_{j=0}^{n-1} (1-t\mu_j)}.
\end{equation}
Observe that $\mu_{n-1}$ is the largest of the $\mu_j$'s. Assume that $t =
\frac{1}{2\mu_{n-1}}$, which implies that $(1-t\mu_j) \ge e^{-t\mu_j \ln 4}$.
    Plugging this into equation~\ref{eq:chernoff}, we get:
\begin{equation}
  \label{eq:chernoff2}
  \prob[Y \ge (1+\delta)\mu] \le e^{-(1+\delta-\ln 4)\mu/(2\mu_{n-1})}.
\end{equation}
Further observe that $\mu = 2n/\ln 2+(\mu_{n-1}-2/\ln 2) H(n)\geq \mu_{n-1} H(n)$, where $H(n) := 1 + 1/2 + 1/3 +
\ldots +1/n$ is the $n$-th Harmonic number. Since $H(n) \ge \ln n$, we get our
high probability result:
\begin{equation}
  \label{eq:chernoff2}
  \prob[Y \ge (1+\delta)\mu] \le n^{-(1+\delta-\ln 4)/2}.
\end{equation}
Since $\mu = O(n \log n)$, this completes the proof of Theorem~\ref{thm:main}.
\end{proofof}

\begin{remark}
  The truncation step is crucial for a high probability result. However, if
  one were interested only in obtaining a small expected running time, then
  the untruncated walk would result in $O(n\log n)$ time directly. In fact,
  the expected total number of steps in all the untruncated random walks is at
  most $n + nH(n)$, directly by Lemma~\ref{lm:main}.
\end{remark}
\begin{remark}
  The algorithm above can be used to obtain a simple algorithm for
  edge-coloring bipartite graphs with maximum degree $d$ in time $O(m\log^2
  n)$ (slightly worse than the $O(m\log d)$ dependence obtained in
  \cite{cos:regular2001}). In the first step one reduces the problem to that
  on a regular graph with $O(m)$ edges as described in \cite{cos:regular2001}.
  The lists of neighbors of every vertex of the graph can then be arranged in
  a data structure that supports fast uniform sampling and deletions (for
  example, one can use a skip-list, resulting in $O(\log n)$ runtime for
  SAMPLE-OUT-EDGE). It remains to find matchings repeatedly, taking $O(n\log^2
  n)$ time per matching (the extra $\log n$ factor comes from the runtime of
  SAMPLE-OUT-EDGE, assuming an implementation using skip-lists). This takes
  $O(nd\log^2 n)=O(m\log^2 n)$ time (note that $n$ is the number of vertices
  in the regular graph, which could be different from the number of vertices
  in the original graph).
\end{remark}

\section{Matchings in Doubly-Stochastic Matrices}
\label{sec:bvn}

We now apply techniques of the previous section to the problem of finding a
perfect matching in the support of an $n\times n$ doubly stochastic matrix
$\M$ with $m$ non-zero entries. A doubly-stochastic matrix can be viewed as a
regular graph possibly with parallel edges, and we can thus use the same
algorithm and analysis as above, provided that SAMPLE-OUT-EDGE can be
implemented efficiently. We start by describing a simple data structure for
implementing SAMPLE-OUT-EDGE. For each vertex $v$, we store all the outgoing
edges from $v$ in a balanced binary search tree, augmented so that each node
in the search tree also stores the weight of all the edges in its
subtree. Inserts into, deletes from, and random samples from this augmented
tree all take time $O(\log n)$~\cite{clrs}, giving a running time of
$O(n\log^2 n)$ for finding a matching in the support of the doubly stochastic
matrix.

Superficially, it might seem that initializing the balanced binary search
trees for each vertex will take total time $\Theta(m\log n)$. However, note that
there is no natural ordering on the outgoing edges from a vertex, and we can
simply superimpose the initial balanced search tree for a vertex on the
adjacency array for that vertex, assuming that the underlying keys are in
accordance with the (arbitrary) order in which the edges occur in the
adjacency array.

The complete Birkhoff-von Neumann decomposition can be computed by subtracting
an appropriately weighted matching matrix from $\M$ every time a matching is
found, thus decreasing the number of nonzero entries of $\M$. Note that the
augmented binary search tree can be maintained in $O(\log n)$ time per
deletion. This yields the algorithm claimed in Corollary \ref{cor:bvn}.

\section{An $\Omega(nd)$ Lower Bound for Deterministic Algorithms}
\label{sec:detlb}
\newcommand{\A}{{\mathcal A}}
\newcommand{\Q}{{\mathcal Q}}
\newcommand{\adj}{{\sf Adj}}
\renewcommand{\deg}{{\sf deg}}
\newcommand{\G}{{\mathcal G}}

In this section, we will prove Theorem~\ref{thm:detlb}. We will show that for
any positive integer $d$, any deterministic algorithm to find a perfect
matching in a $d$-regular bipartite graph requires $\Omega(nd)$ probes, even
in the adjacency array representation, where the ordering of edges in an array
is decided by an adversary.  Specifically, for any positive integer $d$, we
construct a family $\G(d)$ of $d$-regular bipartite graphs with $O(d)$
vertices each that we refer to as {\em canonical} graphs.  A canonical
bipartite graph $G(P \cup \{t\},Q \cup \{s\},E) \in \G(d)$ is defined as
follows.  The vertex set $P = P_1 \cup P_2$ and $Q = Q_1 \cup Q_2$ where
$|P_i| = |Q_i| = 2d$ for $i \in \{ 1,2 \}$. The vertex $s$ is connected to an
arbitrary set of $d$ distinct vertices in $P_1$ while the vertex $t$ is
connected to an arbitrary set of $d$ distinct vertices in $Q_2$. In addition,
$G$ contains a {\em perfect matching} $M'$ of size $d$ that connects a subset
$Q'_1 \subseteq Q_1$ to a subset $P'_2 \subseteq P_2$, where $|Q'_1| =
|P'_2|=d$.  The remaining edges in $E$ connect vertices in $P_i$ to $Q_i$ for
$i \in \{ 1,2 \}$ so as to satisfy the property that the degree of each vertex
in $G$ is exactly $d$.  It suffices to show an $\Omega(d^2)$ lower bound for
graphs drawn from $\G(d)$ since we can take $\Theta(n/d)$ disjoint copies of
canonical graphs to create a $d$-regular graph on $n$ vertices.

\medskip
\noindent
{\bf Overview:}
Let $A$ be a deterministic algorithm for finding a perfect matching in graphs
drawn from $\G(d)$. We will analyze a game between the algorithm $A$ and an
adaptive adversary $\A$ whose goal is to maximize the number of edges that $A$
needs to examine in order to find a perfect matching. In order to find a
perfect matching, the algorithm $A$ must find an edge in $M'$, since $s$ must
be matched to a vertex in $P_1$, and thus in turn, some vertex in $Q_1$ must
be matched to a vertex in $P_2$.  We will show that the adversary $\A$ can
always force $A$ to examine $\Omega(d^2)$ edges in $G$ before revealing an
edge in $M'$.  The specific graph $G \in \G(d)$ presented to the algorithm
depends on the queries made by the algorithm $A$.  The adversary adaptively
answers these queries while maintaining at all times the invariant that the
partially revealed graph is a subgraph of some graph $G \in \G(d)$.  The cost
of the algorithm is the number of edge locations probed by it before $\A$
reveals an edge in $M'$ to $A$.

In what follows, we assume that the adversary reveals $s, t$ and the partition
of remaining vertices into $P_i, Q_i$ for $1 \le i \le 2$, along with all
edges from $s$ to $P_1$ and all edges from $t$ to $Q_2$, to the algorithm at
the beginning. The algorithm pays no cost for this step.

\medskip
\noindent {\bf Queries:} Whenever the algorithm $A$ probes a new location in
the adjacency array of some vertex $u \in P \cup Q$, we will equivalently view
$A$ as making a query $\Q(u)$ to the adversary $\A$, in response to which the
adversary outputs a vertex $v$ that had not been so far revealed as being
adjacent to $u$.

\medskip
\noindent
{\bf Subgraphs consistent with canonical graphs:}
Given a bipartite graph $G'(P \cup \{t\},Q \cup \{s\},E')$, we say that a
vertex $u \in P \cup Q$ is {\em free} if its degree in $G'$ is strictly
smaller than $d$. The lemma below identifies sufficient conditions for a graph
to be a subgraph of some canonical graph in $\G(d)$.

\begin{lemma}
\label{lem:lb1}
Let $G_r(P \cup \{t\},Q \cup \{s\},E_r)$ be any bipartite graph such that 

\begin{itemize}
\item[(a)]
the vertex $s$ is connected to $d$ distinct vertices
in $P_1$ and the vertex $t$ is connected to $d$ distinct vertices in $Q_2$, 

\item[(b)]
all other edges in $G_r$ connect a vertex in $P_i$ to a vertex in $Q_i$ for
some $i \in \{ 1,2 \}$, 

\item[(c)]
degree of each vertex in $G_r$ is at most $d$, and

\item[(d)]
there exist at least $(d+1)$ free vertices each in both $Q_1$ and $P_2$.
\end{itemize}

Also, let $u, v$ be a pair of vertices such that $u \in P_i$ and $v \in Q_i$
for some $i \in \{ 1,2 \}$, and $(u,v) \not \in E_r$.
Then there exists a canonical graph $G(P \cup \{t\},Q \cup \{s\},E) \in \G(d)$ 
such that that $E_r \cup {(u,v)} \subseteq E$ iff both $u$ and $v$ have degree 
strictly less than $d$ in $G_r$.
\end{lemma}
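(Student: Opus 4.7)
The claim is an if-and-only-if. The ``only if'' direction is immediate: if $G = (P \cup \{t\}, Q \cup \{s\}, E) \in \G(d)$ is a canonical graph containing $E_r \cup \{(u,v)\}$, then every vertex of $G$ has degree exactly $d$, and since $E_r \subseteq E \setminus \{(u,v)\}$, both $u$ and $v$ have degree at most $d-1$ in $G_r$, i.e., both are free.

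For the ``if'' direction, suppose both $u$ and $v$ are free in $G_r$, where $u \in P_i$ and $v \in Q_i$ for some $i \in \{1,2\}$. The plan is to construct a canonical $G$ explicitly. Set $G_r' := G_r \cup \{(u,v)\}$. Since the new edge only changes the degrees of $u$ and $v$, passing from $G_r$ to $G_r'$ decreases the number of free vertices in at most one of $Q_1$ or $P_2$ by at most one (depending on $i$) and leaves the other count unchanged. Combined with condition (d), this still leaves at least $d$ free vertices in each of $Q_1$ and $P_2$ under $G_r'$. I would then pick arbitrary $d$-element subsets $Q_1' \subseteq Q_1$ and $P_2' \subseteq P_2$ of free vertices of $G_r'$ and add an arbitrary perfect matching $M'$ between $Q_1'$ and $P_2'$; these edges cross $Q_1$ and $P_2$, so by condition (b) they are disjoint from $E_r$, and after their addition every vertex of $Q_1' \cup P_2'$ still has degree at most $d$.

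What remains is to add further edges strictly inside $P_1 \times Q_1$ and strictly inside $P_2 \times Q_2$ so that every vertex of $P \cup Q$ ends up with degree exactly $d$. For each $p \in P_j$ and $q \in Q_j$ one reads off a nonnegative ``deficit'' from its current degree in $G_r' \cup M'$ after accounting for its $s$- or $t$-edges and for $M'$. The choice of $Q_1', P_2'$ among free vertices of $G_r'$ keeps all deficits nonnegative, and the sums of deficits on the two sides of each bipartite subgraph agree by a direct counting argument. The task thus reduces to realizing these deficit degree sequences on $P_j \times Q_j$ for $j \in \{1,2\}$ as bipartite graphs avoiding the edges already present in $G_r'$.

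The main obstacle is precisely this bipartite degree-sequence realization with forbidden edges. The crucial slack is that $|P_j| = |Q_j| = 2d$ while every current degree in the $P_j$-$Q_j$ subgraph of $G_r' \cup M'$ is at most $d$, so every vertex has at least $d$ available non-neighbors in the opposite part --- no less than its own deficit. With this slack, a realization exists by a standard Gale-Ryser / Hall-type argument; concretely, one may start from any realization of the target degree sequence (which is graphic, so it exists) and apply $2$-switches to incorporate each already-present edge of $G_r'$ one at a time, the slack ensuring that a feasible $2$-switch is always available. This produces a canonical $G \in \G(d)$ containing $E_r \cup \{(u,v)\}$ and completes the proof.
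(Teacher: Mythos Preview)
Your argument follows the same skeleton as the paper's: the ``only if'' direction is immediate, and for ``if'' you add $(u,v)$, then place a size-$d$ matching $M'$ between $d$ still-free vertices of $Q_1$ and $d$ still-free vertices of $P_2$ (condition~(d) guarantees at least $d$ remain on each side after the new edge), and finally fill in edges inside each $P_i\times Q_i$ to bring every degree up to $d$. The only divergence is in this last filling-in step. The paper does not require canonical graphs to be simple: it just notes that after $M'$ is inserted the total degree on the $P_i$ side equals that on the $Q_i$ side, and then \emph{repeatedly pairs any vertex of degree below $d$ in $P_i$ with any vertex of degree below $d$ in $Q_i$}, allowing parallel edges, until all degrees equal $d$. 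That one sentence is the entire completion argument.

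Your Gale--Ryser/$2$-switch route is therefore unnecessary for the lemma as stated, and as written it is not quite airtight either. The local slack ``each vertex's deficit is at most its number of available non-neighbours'' is necessary but not by itself sufficient to guarantee a feasible $2$-switch at every step: when you try to switch in an edge $(p,q)\notin H_0$ you need some $q'\in N_{H_0}(p)$ and $p'\in N_{H_0}(q)$ with $(p',q')\notin H_0$, yet the number of candidate pairs is $d_p d_q\le d^2$ while the number of $H_0$-edges between $N_{H_0}(q)$ and $N_{H_0}(p)$ can also be of order $d^2$, and you must additionally avoid undoing previously incorporated $F$-edges. If you really wanted a \emph{simple} canonical graph, it would be cleaner to verify feasibility of the degree-constrained subgraph problem on $K_{2d,2d}\setminus F$ directly via an Ore/Hall-type condition; but for this lemma (and the lower bound it supports) the paper's greedy pairing with multi-edges is all that is needed.
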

\begin{proof}
If either $u$ or $v$ has degree $d$, then clearly addition of the edge $(u,v)$ 
would violate the $d$-regularity condition satisfied by all graphs in $\G(d)$. 

If both $u$ and $v$ have degree strictly less than $d$, then let $G'(P \cup
\{t\},Q \cup \{s\},E')$ be the graph obtained by adding edge $(u,v)$ to $G_r$,
that is, $E ' = E_r \cup \{(u,v)\}$. The degree of each vertex in $G'$ remains
bounded by $d$. We now show how $G'$ can be extended to a $d$-regular
canonical graph.

We first add to $G'$ a perfect matching $M'$ of size $d$ connecting an
arbitrary set of $d$ {\em free} vertices in $Q_1$ to an arbitrary set of $d$
{\em free} vertices in $P_2$.  This is feasible since by assumption, the graph
$G_r$ had at least $(d+1)$ free vertices each in both $Q_1$ and $P_2$, and
thus addition of the edge $(u,v)$ still leaves at least $d$ free vertices each
in $Q_1$ and $P_2$.  In the resulting graph, since the total degree of all
vertices in $P_i$ is same as the total degree of all vertices in $Q_i$, we can
now repeatedly pair together a vertex of degree less than $d$ in $P_i$ with a
vertex of degree less than $d$ in $Q_i$ until degree of each vertex becomes
exactly $d$, for $i \in \{1,2\}$. The resulting graph satisfies all properties
of a canonical graph in the family $\G(d)$.
\end{proof}

\medskip
\noindent
{\bf Adversary strategy:}
For each vertex $u \in P \cup \{t\},Q \cup \{s\}$, the adversary $\A$
maintains a list $N(u)$ of vertices adjacent to $w$ that have been so far
revealed to the algorithm $A$.  Wlog we can assume that the algorithm $A$
never queries a vertex $u$ for which $|N(u)| = d$. At any step of the game,
we denote by $G_r$ the graph
formed by the edges revealed thus far.  We say the game is in {\em evasive}
mode if the graph $G_r$ satisfies the condition $(a)$ through $(d)$ of
Lemma~\ref{lem:lb1}, and is in {\em non-evasive} mode otherwise. Note that the
game always starts in the evasive mode, and then switches to non-evasive mode.

When the game is in the evasive mode, in response to
a query $\Q(u)$ by $A$ for some free vertex $u \in P_i$ ($i \in \{ 1,2 \}$), 
$\A$ returns an arbitrary free vertex $v \in Q_i$ such that $v \not\in  N(u)$.
The adversary then adds $v$ to $N(u)$ and $u$ to $N(v)$. 
Similarly, when $A$ asks a query $\Q(u)$ for some free vertex $u \in Q_i$ ($i \in \{ 1,2 \}$), 
$\A$ returns an arbitrary free
vertex $v \in P_i$ such that $v \not\in  N(u)$.
It then adds $v$ to $N(u)$ and $u$ to $N(v)$ as above.

As the game transitions from evasive to non-evasive mode, Lemma~\ref{lem:lb1} ensures
existence of a canonical graph $G \in \G(d)$ that contains the graph revealed by the
adversary thus far as a subgraph.  The adversary answers all subsequent
queries by $A$ in a manner that is consistent with the edges of $G$.

The lemma below shows that the simple adversary strategy above forces 
$d^2$ queries before the evasive mode terminates.

\begin{lemma}
The algorithm makes at least $d^2$ queries before the game enters non-evasive mode. 
\end{lemma}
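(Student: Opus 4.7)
The plan is to track the evolution of the revealed graph $G_r$ during the evasive phase and observe that the only way the adversary strategy can be forced into non-evasive mode is through the violation of condition~(d) of Lemma~\ref{lem:lb1}. Conditions~(a)--(c) are maintained by construction: the initial phase sets up (a), the adversary only ever returns a vertex $v \in Q_i$ in response to $\mathcal{Q}(u)$ for $u \in P_i$ (and symmetrically), so (b) holds; condition (c) is immediate since the adversary refuses to answer queries on vertices of degree $d$. Hence the transition out of evasive mode must correspond to (d) failing, i.e., to either $Q_1$ or $P_2$ having at most $d$ free vertices.

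Next, I would convert the failure of (d) into an edge-count lower bound. Since $|Q_1|=|P_2|=2d$, ``at most $d$ free vertices on one side'' means at least $d$ vertices on that side have reached degree exactly $d$. Without loss of generality, suppose there are $d$ vertices $v_1,\ldots,v_d \in Q_1$ each of degree $d$ in $G_r$ at the moment the game leaves evasive mode. By condition~(b), every edge incident to a vertex in $Q_1$ has its other endpoint in $P_1$ (note $s$ is attached to $P_1$, not to $Q_1$, so none of the degree of any $v_i$ comes from edges revealed in the free initial phase). Therefore the edges incident to $\{v_1,\ldots,v_d\}$ are all distinct edges between $P_1$ and $Q_1$, and they number exactly $d\cdot d = d^2$.

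Finally, I would connect this edge count to the query count. The adversary reveals exactly one new edge per query $\mathcal{Q}(u)$ with $u \notin\{s,t\}$ (the initial step that exposes $s$, $t$, the partition, and the edges from $s$ to $P_1$ and $t$ to $Q_2$ is free and contributes no edges into $P_1$--$Q_1$ or $P_2$--$Q_2$). Hence the $d^2$ edges between $P_1$ and $Q_1$ described above correspond to $d^2$ distinct queries by the algorithm $A$; the symmetric case with $P_2$ gives the same bound. This shows that evasive mode persists through at least $d^2$ queries, which is exactly the claim of the lemma.

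I do not anticipate a serious obstacle: the argument is essentially a clean degree-counting step once one has crystallized that (i) every query during evasive mode adds exactly one $P_i$--$Q_i$ edge, and (ii) the initial ``free'' edges incident to $s$ and $t$ contribute nothing to the degree of vertices in $Q_1$ or $P_2$. The only point requiring care is verifying that the adversary really can always answer a query $\mathcal{Q}(u)$ with a fresh free vertex on the other side while in evasive mode; this is immediate because condition~(d) guarantees at least $d+1 > |N(u)|$ free candidates in $Q_i$ (resp.\ $P_i$), so a valid response always exists, and Lemma~\ref{lem:lb1} then certifies consistency with some canonical graph.
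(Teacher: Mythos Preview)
Your proof is correct and follows essentially the same approach as the paper: both observe that conditions~(a)--(c) are preserved by construction, so non-evasive mode is triggered only by the failure of~(d), and both reduce this to a degree-counting argument showing that at least $d$ vertices in $Q_1$ (or $P_2$) must reach degree $d$, which requires $d^2$ revealed edges and hence $d^2$ queries. One small slip in your final aside: condition~(d) only guarantees free vertices in $Q_1$ and $P_2$, not in $P_1$ or $Q_2$, so your feasibility remark does not literally cover queries to $u\in Q_1$ or $u\in P_2$; but this is tangential to the lemma itself, and the paper does not address it either.
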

\begin{proof}
The adversary strategy ensures that conditions $(a)$ through $(c)$ in
Lemma~\ref{lem:lb1} are maintained at all times as long as the game is
in the evasive mode. So we consider the first time that condition $(d)$ is violated.
Since each query answered by the adversary in the evasive mode contributes $1$ to the 
degree of exactly one vertex in $Q_1 \cup P_2$,  
$\A$ must answer at least $d^2$ queries before the number of free vertices 
falls below $(d+1)$ in either $Q_1$ or $P_2$. The lemma follows.
\end{proof}

\medskip

Since $A$ can not discover an edge in $M'$ until the game enters the
non-evasive mode, we obtain the desired lower bound of $\Omega(d^2)$.

\pdfbookmark[1]{\refname}{My\refname} \bibliographystyle{plain}

\end{document}